\newtheorem{lemat}{Lemma}
\title{Slow-roll approximation in loop quantum cosmology} 
\author{Joanna Luc and}
\author{Jakub Mielczarek}
\affiliation{Institute of Physics, Jagiellonian University, {\L}ojasiewicza 11, 30-348 Cracow, Poland}
\emailAdd{jakub.mielczarek@uj.edu.pl}
\abstract{The slow-roll approximation is an analytical approach to study
dynamical properties of the inflationary universe. In this article, systematic 
construction of the slow-roll expansion for effective loop quantum 
cosmology is presented. The analysis is performed up to the fourth order 
in both slow-roll parameters and the parameter controlling the strength 
of deviation from the classical case. The expansion is performed for 
three types of the slow-roll parameters: Hubble slow-roll parameters, 
Hubble flow parameters and potential slow-roll parameters. An accuracy 
of the approximation is verified by comparison with the numerical 
phase space trajectories for the case with a massive potential term. 
The results obtained in this article may be helpful in the search for 
the subtle quantum gravitational effects with use of the cosmological 
data.} 
\keywords{inflation, slow-roll expansion, loop quantum cosmology, 
holonomy corrections}
\begin{document}

\maketitle

\bibliographystyle{abbrv}

\section{Introduction}

The inflationary epoch provides a unique window to test various generalizations
of the well-grounded physical theories. This, in particular, concerns supersymmetric 
extensions of the Standard Model of elementary interactions \cite{Randall:1995dj,
HenryTye:2006uv,Allahverdi:2006iq,Baumann:2014nda}. The inflationary phase may also serve as 
a testing ground for quantum extensions of General Relativity. Awareness of the 
second possibility has significantly increased in the recent years, which is reflected by
the number of studies performed within various approaches to quantum gravity. The 
attempts to confront cosmological data, throughout inflation, with such approaches to 
quantum gravity as Loop Quantum Gravity \cite{Mielczarek:2009zw,Barrau:2013ula,
Agullo:2013ai}, Causal Dynamical Triangulations \cite{Mielczarek:2015cja,
Mielczarek:2015xyu}, Horava-Lifschitz gravity \cite{Mukohyama:2010xz}, Canonical 
Quantum Gravity \cite{Kiefer:2011cc} have been made.    

The simplest model of inflationary period is given by the scalar field theory undergoing 
the slow-roll type of evolution. In the light of the WMAP data \cite{Komatsu:2010fb} 
the inflationary model with the quadratic potential (convex potential function) served 
as the most conservative explanation of the available observations. The model is, 
however, not favored by the combined up to date Planck and BICEP II observations 
\cite{Ade:2015tva}. Instead, models with either non-minimal coupling between scalar 
matter and gravity (Starobinsky model \cite{Starobinsky:1980te}) or upside-down 
(concave potential function) gain observational support. In both cases, the slow-roll 
conditions are satisfied.
 
Due to increasing precision of the observational data, reliable discrimination between 
the two models mentioned above as well as other possibilities requires departure 
beyond the linear order of the slow-roll approximation. Because of this, systematic 
analysis of the more subtle, higher order, contributions gains justification. This is 
especially important in the situation when deviations from the canonical inflationary 
models are expected to be very small. The quantum gravitational corrections to 
the inflationary dynamics are often expected to be of this kind. 

In this article, we study a phase of inflation with a particular type of the quantum 
gravitational corrections, namely with the so-called \emph{holonomy corrections}
\cite{Bojowald:2006da,Ashtekar:2011ni}. The term ``corrections'' is usually associated 
with a small deviation. In case of the holonomy corrections this is not true in the deep 
Planck epoch, where the holonomy corrections lead to non-perturbative modifications 
of dynamics generating such effects as: singularity avoidance \cite{Ashtekar:2006rx}, 
signature change \cite{Cailleteau:2011kr,Bojowald:2011aa,Bojowald:2015gra} and 
the state of silence \cite{Mielczarek:2012tn,Mielczarek:2014kea}. On the other hand, 
during inflation the holonomy corrections can be treated as small deviations from the 
standard case. 

While the inflation with holonomy corrections has already been a subject of quite broad 
studies, the linear slow-roll approximation was mostly discussed in this context 
\cite{Singh:2006im,Artymowski:2008sc,Ashtekar:2009mm,Mielczarek:2010bh, 
Mielczarek:2013xaa}. Furthermore, the definitions of the holonomy-corrected 
lowest order potential slow-roll parameters have been introduced on the basis of 
some heuristic arguments. It turns out that such approach cannot be directly extrapolated 
to parametrize the higher order contributions. The aim of this paper is to resolve this 
issue, providing a systematic procedure of constructing the slow-roll approximation
in loop quantum cosmology for the three major types of the slow-roll parameters 
discussed in literature.  
 
In case of the Hubble flow parameters \cite{Schwarz:2001vv,Schwarz:2004tz}  
the analysis has been already been partially performed in Refs. \cite{Zhu:2015xsa,
Zhu:2015owa}. The expansion is, however, not sufficient from the perspective of the 
potential shape reconstruction. Here, firstly the slow-roll expansion in terms of the 
Hubble slow-roll parameters is studied. Then, the Hubble flow parameters are 
expressed in terms of the Hubble slow-roll parameters. Finally, slow-roll expansion 
in terms of the potential parameters is discussed. In the calculations, the methodology 
developed at the level of standard inflationary cosmology \cite{Liddle:1994dx} is adopted. 

The starting point for our analysis is the modified Friedmann equation \cite{Ashtekar:2006wn}: 
\begin{equation}
H^2= \frac{\kappa}{3} \rho\left( 1- \frac{\rho}{\rho_c} \right),
\label{eq:H-kwant}
\end{equation}
where $\rho_c$ is a maximal allowed energy density and $\kappa := \frac{8\pi}{m^2_{\text{Pl}}}$. 
In our calculations, the energy density of the scalar field takes the classical form 
\begin{equation} \label{eq:rho-inflaton}
\rho = \frac {1}{2} \dot{\phi}^{2} + V(\phi),
\end{equation} 
and the field variable $\phi$ satisfies the Klein-Gordon equation:
\begin{equation}\label{eq:klein-gordon} 
\ddot\phi + 3H\dot\phi +  V_{, \phi }=0.
\end{equation}
Furthermore, we introduce the following dimensionless ratio
\begin{equation} \label{eq:delta}
\delta_H := \frac{\rho}{\rho_c} \leq 1,
\end{equation}
which parametrizes departure from the classical case, will be considered here as an 
expansion parameter, supplementary to the slow-roll parameters. Except the maximal 
value $\delta_H=1$, two other values of the parameter $\delta_H$ are distinguished. 
The first is $\delta_H=0$, which corresponds to the unmodified case, associated 
with taking  $\rho_c \rightarrow \infty$. Alternatively, $\delta_H\rightarrow 0 $ just when 
energy density is decreasing, which is an another way to recover the classical 
behavior. The second, more interesting value of the deformation parameter is 
$\delta_H=\frac{1}{2}$. As indicated by the analysis of the perturbative sector of 
inhomogeneities \cite{Cailleteau:2011kr,Bojowald:2011aa,Bojowald:2015gra}, the 
value of $\delta_H=\frac{1}{2}$ (for which $H=H_{\text{max}}$) corresponds to the 
signature change between Lorentzian and Euclidean.  Namely, in the deep quantum 
regime $\delta_H \in (1/2,1]$ the space-time signature is Euclidean, while when 
passing to the ``low'' energy density range  $\delta_H \in (0,1/2)$ the space-time
acquires the Lorentzian signature, as expected. Because of this effect, in practice, 
the region $\delta_H \in (0,1/2)$ will be most adequate from the perspective of 
studying the slow-roll approximation. 

Throughout this article we  consequently apply the Planck units, where $\hbar = c = 1$ and 
$G=1/m^{2}_{\text{Pl}}$ and where $m_{\text{Pl}}$ denotes the Planck mass. 

\section{Hubble slow-roll parameters}

Our goal is to construct systematic slow-roll approximation of the inflationary dynamics 
characterized by the condition 
\begin{equation} 
\label{eq:inflacja}
\ddot{a} > 0.
\end{equation}

A central role in the slow-roll approximation is played by the slow-roll parameters.
The parameters are defined such that the approximation is valid in the range of the absolute 
values of the slow-roll parameters being much smaller than unity. Among different possible
types of the slow-roll parameters, the Hubble slow-roll parameters and the potential 
slow-roll parameters \cite{Liddle:1994dx} are in common use. 

In this section, we focus on studying the slow-roll approximation for the cosmological
dynamics characterized by Eq. \ref{eq:H-kwant} using the first type of the slow-roll 
parameters. Despite the fact that the dynamics under consideration is deformed with 
respect to the standard one, in our analysis we keep the definitions of the Hubble slow-roll 
parameters in the standard form. Such kinematical approach allows us to compare 
results characterized by different dynamical equations. 

In case of the Hubble slow-roll parameters, the Hubble factor is considered as an 
explicit function of $\phi$, namely $H=H(\phi)$. The formulation is, therefore, reserved 
to the single field inflationary models. Following Ref. \cite{Liddle:1994dx} the parameters 
can be introduced as follows:
\begin{equation} \label{eq:epsilonh} 
\epsilon_{H} :=\frac{2}{\kappa}\left( \frac { H_{, \phi } }{H} \right)^{2},
\end{equation}
\begin{equation} \label{eq:etah} 
\eta_{H} :=\frac{2}{\kappa} \frac { H_{, \phi \phi} }{H}, 
\end{equation}
\begin{equation} \label{eq:xih} 
\xi_{H} :=\frac{2}{\kappa} \left( \frac { H_{, \phi }  H_{, \phi \phi \phi} }{H^2} \right)^{\frac{1}{2}},
\end{equation}
\begin{equation} \label{eq:sigmah}
\sigma_{H} := \frac{2}{\kappa} \left( \frac { (H_{, \phi })^2  H_{, \phi \phi \phi \phi} }{H^3} \right)^{\frac{1}{3}}.
\end{equation}
In general, the Hubble slow-roll parameters can be generated from the following formula:
\begin{equation} \label{eq:parametry-H-def}
{}^{n}\beta_{H} :=
\left\lbrace 
\begin{array}{l}
\epsilon_H \text{ for } n=0, \\
\frac{2}{\kappa} \left( \frac {( H_{, \phi} )^{n-1} \left(\frac{d^{n+1}}{d\phi^{n+1}}H \right)} 
{H^{n}} \right)^{\frac{1}{n}} \text{ for } n \geq 1 .
\end{array}
\right.
\end{equation}

The idea of slow-roll expansion is that in the inflationary phase, the slow-roll 
parameters satisfy the condition $|{}^{n}\beta_{H}|<1$, which allows for perturbative 
analysis. 

Let us verify this condition in case of $n=1$, corresponding to the parameter $\epsilon_H$. 
For this purpose, we have to differentiate Eq. \eqref{eq:H-kwant} with respect to time, 
using (\ref{eq:rho-inflaton}) and (\ref{eq:klein-gordon}). This results in
\begin{equation} \label{eq:H-dot}
\dot{H} = - \frac{\kappa}{2} \dot{\phi}^2 \left(1 - 2 \delta_H \right),
\end{equation}
which, with use of $\dot{H} =  H_{, \phi } \dot{\phi}$, can be further transformed to
\begin{equation} \label{eq:H-phi}
 H_{, \phi }  = - \frac{\kappa}{2} \dot{\phi} \left(1 - 2 \delta_H \right).
\end{equation}
Applying the obtained expression to definition (\ref{eq:epsilonh}), we find that 
\begin{equation}
\epsilon_H = - \frac{\dot{H}}{H^2}(1-2\delta_H)=\left(1- \frac{\ddot{a}}{aH^2}\right)(1-2\delta_H). 
\end{equation}
It is transparent that, in the domain $\delta_H <\frac{1}{2}$ the condition (\ref{eq:inflacja}) 
translates into $\epsilon_H<1$.

Our task now is to express $H^2$ up to the fourth order in both, the  slow-roll parameters and 
the parameter  (\ref{eq:delta}). For this purpose, we re-express (\ref{eq:H-phi}) to the form 
\begin{equation} \label{eq:dot-phi-H-phi}
\dot{\phi} = - \frac{2}{\kappa} \frac{ H_{, \phi } }{1 - 2 \delta_H}
\end{equation}
and apply it in the formula \eqref{eq:rho-inflaton}, obtaining 
\begin{equation} \label{eq:rho-H-phi}
\rho =  \frac{2( H_{, \phi } )^2}{\kappa^2 \left(1 - 2 \delta_H \right)^2} + V.
\end{equation}

In the next step, from \eqref{eq:H-kwant} and \eqref{eq:rho-H-phi} we obtain expression 
\begin{equation}\label{eq:H-kwant-2}
H^2 = \frac{\kappa}{3} V \frac{1 - \delta_H}{1 - \frac{1}{3} \epsilon_H \frac{1 - \delta_H}{\left(1 - 2 \delta_H \right)^2}},
\end{equation}
which, by expanding, gives us $H^2$ up to the fourth order in $\delta_H$ and the 
Hubble slow-roll parameter $\epsilon_H$:
\begin{eqnarray} 
\label{eq:H-kwant-epsilonH}
H^2 &=& \frac{\kappa}{3} V \left(1 + \frac{1}{3} \epsilon_H - \delta_H + \frac{1}{9} \epsilon_{H}^{2} + \frac{2}{3}\epsilon_H \delta_H 
+ \frac{1}{27} \epsilon_H^3 + \frac{5}{3} \epsilon_H \delta_H^2 + \frac{5}{9} \epsilon_H^2 
\delta_H \right. \nonumber \\
&+&  \left.  \frac{1}{81}  \epsilon_H^4+\frac{8}{27}\epsilon_H^3\delta_H+\frac{19}{9}\epsilon_H^2\delta_H^2+4\epsilon_H\delta_H^3+\dots  \right).  
\end{eqnarray}
In classical limit ($\delta_H = 0$), the result is coincides with the one derived in Ref. \cite{Liddle:1994dx}.

\section{Hubble flow parameters} \label{HFP}

A drawback of the Hubble and potential type slow-roll parameters is that the 
definitions are applicable for single inflation field models only. Therefore, it is 
desired to introduce definitions which work also for other types of matter contents, 
such as multi-component scalar field. 

In the course of inflation, the Hubble factor is a nearly constant (slowly decreasing) function
of a scale factor. The rate of change of the Hubble factor or its inverse (the horizon/Hubble
radius) as a function of the scale factor is a natural starting point for the definition 
of the the slow-roll parameters (in a way independent on the form of the matter content). 
This observation has been employed in the definition of the so-called 
``Hubble flow parameters,'' which are introduced with use of the recurrence relation 
\cite{Schwarz:2001vv,Schwarz:2004tz}:  
\begin{equation}
\epsilon_{n+1} \equiv \frac{d \ln \epsilon_{n} }{d \ln a} = \frac{1}{H} \frac{\dot{\epsilon}_{n}}{\epsilon_{n}}, 
\label{recurence1}
\end{equation}
together with the initial condition
\begin{equation}
\epsilon_0 \equiv \frac{H_0}{H}.
\label{initial1} 
\end{equation}

In the context of loop quantum cosmology, the Hubble flow parameters (up to the 
quadratic order) have been used in Refs. \cite{Zhu:2015xsa,Zhu:2015owa}, where 
quantum corrections to the inflationary spectra were analyzed. 

In what follows we analyze expressions for the first four Hubble flow parameters 
and relate them with the Hubble slow-roll parameters introduced in the previous 
section. Further application of the results of the Sec. \ref{PSRP} allows us to 
determine dependence of the Hubble slow-roll parameters on the potential 
slow-roll parameters. With use of such relation, the procedure of reconstruction 
of the potential function based on the observationally determined values of 
$\epsilon_{n}$ may be conducted. 

Application of the recurrence relation (\ref{recurence1}), together with the 
initial condition (\ref{initial1}), leads us to the following expressions:
\begin{eqnarray}
\epsilon_{1} &=& - \frac{\dot{H}}{H^2} =\frac{\epsilon_H}{1-2\delta_H}  
= \epsilon_H+2\epsilon_H\delta_H+4\epsilon_H\delta_H^2+8\epsilon_H \delta_H^3+\dots. \\
\epsilon_{2} &=& \frac{\ddot{H}}{\dot{H}H}-2\frac{\dot{H}}{H^2} 
= \frac{2}{(1-2\delta_H)}(\epsilon_H-\eta_H)-4\epsilon_H\delta_H 
\frac{(1-\delta_H)}{(1-2\delta_H)^3} \nonumber  \\
&=&2\epsilon_H-2 \eta_H-12\epsilon_H\delta_H^2-56\epsilon_H\delta_H^3-4\eta_H\delta_H-8\eta_H\delta^2_H-16\eta_H \delta_H^3+\dots \label{epsilon2} \\
\epsilon_{3} & =& \frac{1}{\ddot{H} H - 2 \dot{H}^2} \left(\dddot{H} - \frac{\ddot{H}^2}{\dot{H}} 
- 3 \frac{\ddot{H} \dot{H}}{H} + 4 \frac{\dot{H}^3}{H^2}
\right)\\
\epsilon_{3}\epsilon_{2}&=& 4 \epsilon_H^2-6 \eta _H \epsilon_H+2 \xi_H^2+8 \delta_H \epsilon _H^2
-12 \delta_H  \eta _H \epsilon_H+8 \delta_H  \xi_H^2+24 \delta_H^2 \xi_H^2 \nonumber  \\ 
&+&12 \delta_H^2 \eta _H \epsilon _H+24 \delta_H^2 \epsilon _H^2+\dots  \\
\epsilon_{4} & =& \left[ 
-H^4 \dot{H} \ddot{H}^2 \dddot{H}+H^4 \ddot{H}^4+H^3 \dot{H}^2 \left(H \ddot{H}
   \ddddot{H}-8 \ddot{H}^3-H \dddot{H}^2\right)+7 H^3 \dot{H}^3 \ddot{H} \right. \nonumber \\
&&    \left.  \dddot{H}-2 H^2 \dot{H}^4 \left(H \ddddot{H}-6 \ddot{H}^2\right)+2 H^2 \dot{H}^5
   \dddot{H}-26 H \dot{H}^6 \ddot{H}+16 \dot{H}^8
\right]/  \nonumber\\
&& \left[ H^2 \dot{H} \left(H \ddot{H}-2 \dot{H}^2\right) \left(H^2 \dot{H} \dddot{H}-H^2 \ddot{H}^2-3 H
   \dot{H}^2 \ddot{H}+4 \dot{H}^4\right)\right]  \\
\epsilon_{4}\epsilon_{3}\epsilon^2_{2} & =& 4 \eta _H^2 \xi _H^2+4 \eta _H \sigma _H^3
-4 \xi _H^4-52 \eta _H \epsilon _H^3+56 \eta _H^2 \epsilon _H^2-24 \eta _H^3 \epsilon _H \nonumber \\ 
&+&4 \xi _H^2\epsilon _H^2-4 \sigma _H^3 \epsilon _H+16 \epsilon _H^4+\dots
\end{eqnarray}

Methodology behind obtaining these formulas is the following: Firstly, we iteratively applied 
the recurrence relation (\ref{recurence1}) to express the Hubble flow parameters in terms 
of the time derivatives of the Hubble parameters. Secondly, we systematically replaced the 
time derivatives of the Hubble parameters by appropriate differentiations with respect to 
the scalar field. In particular, in the lowest order, the adequate formula is:
\begin{equation}
\dot{H} = \dot{\phi} H_{,\phi} = -\frac{2}{\kappa} \frac{ H^2_{, \phi } }{1 - 2 \delta_H},
\end{equation} 
where we used Eq. (\ref{eq:dot-phi-H-phi}). Finally, definitions of the Hubble slow-roll 
parameters were applied. While the procedure is rather straightforward, complexity of 
calculations dramatically increases together with the growth of $n$. 

As an example, let us discuss some of the steps in derivation of $\epsilon_2$. Here, we have to deal 
with the expression $\ddot{H}$, which with use of $\dot{H} = \dot{\phi} H_{,\phi}$, can 
be expressed as
\begin{equation}
\ddot{H} = \ddot{\phi}H_{,\phi} +\dot{\phi}^2 H_{,\phi\phi}.
\label{ddotHC}
\end{equation} 
Now, defining $C:=  -\frac{2}{\kappa} \frac{1}{(1-2 \delta_H)}$, the  Eq. (\ref{eq:dot-phi-H-phi})
can be written as $\dot{\phi}=CH_{,\phi}$, which after differentiation gives 
\begin{equation}
\ddot{\phi} = \dot{C} H_{,\phi}+C^2H_{,\phi}H_{,\phi\phi}. 
\label{ddotphiC}
\end{equation} 
The time derivative of the auxiliary function $C$ can be expressed as 
\begin{equation}
\dot{C} =4 \delta_H C^2\frac{(1-\delta_H)}{(1-2\delta_H)^2}\frac{H^2_{,\phi}}{H},    
\end{equation} 
which, when applied to Eq. \ref{ddotphiC}, and then by substituting  Eq. \ref{ddotphiC}
in Eq. \ref{ddotHC}, leads to the following formula:
\begin{equation}
\ddot{H} =  4 \delta_H C^2\frac{(1-\delta_H)}{(1-2\delta_H)^2}\frac{H^4_{,\phi}}{H}+
2C^2H^2_{,\phi}H_{,\phi\phi}.
\label{ddotHHubble}
\end{equation}

The equation (\ref{ddotHHubble}) can be now applied in the expression for $\epsilon_2$, 
written as a function of $H,\dot{H}$ and $\ddot{H}$. Finally, the definition of $C$ and the 
definitions of the slow-roll parameters $\eta_H$ and $\epsilon_H$ have to be used to obtain
Eq. (\ref{epsilon2}). 
 
\section{Potential slow-roll parameters} \label{PSRP}

The last type of the slow-roll parameters we are going to study here are the 
potential slow-roll parameters, which are roughly a measure of ``flatness'' 
of the potential functions. 

In the less formalized manner, the potential slow-roll parameters in the 
linear order were  a subject of investigations in Refs. 
\cite{Artymowski:2008sc,Ashtekar:2009mm,Mielczarek:2010bh}. 
In these referred articles, the potential slow-roll parameters, $\epsilon_V$ 
and $\eta_V$ were defined on the basis of heuristic arguments:
\begin{eqnarray}
\overline{\epsilon_{V}} &:=& \frac{1}{2\kappa}\left( \frac { V_{, \phi } }{V} \right)^{2} 
\frac{1}{1-\frac{V}{\rho_c}}, \\
\overline{\eta_{V}} &:=& \frac{1}{\kappa}\frac{ V_{, \phi \phi} }{V}\frac{1}{1 
- \frac{V}{\rho_c}},
\end{eqnarray}
where we introduced overlies to distinguish the definitions from the ones discussed 
here. While form of $\overline{\epsilon_{V}}$ and $\overline{\eta_{V}}$ can be 
deduced from the slow-roll condition (\ref{eq:inflacja}) combined with the 
modified Friedmann equation (\ref{eq:H-kwant}), the same arguments cannot provide us
definitions for the further potential slow-roll parameters. Therefore, it is reasonable to use 
standard rather than the modified (by the factor  $1/(1-V/\rho_c)$) definitions 
of the slow-roll parameters. This is also due to the fact that the role of the potential 
slow-roll parameters is to characterize a potential function, independently on the 
dynamics. The factor  $\frac{1}{1-V/\rho_c}$ is explicitly a sign of the deformed 
dynamics present in the model.

For our purposes, only the first four potential slow-roll parameters are relevant \cite{Liddle:1994dx}:
\begin{equation} 
\label{eq:epsilonv} 
\epsilon_{V} :=\frac{1}{2\kappa}\left( \frac { V_{, \phi } }{V} \right)^{2},
\end{equation}
\begin{equation} 
\label{eq:etav} 
\eta_{V} :=\frac{1}{\kappa} \frac { V_{, \phi \phi} }{V} ,
\end{equation}
\begin{equation}
\label{eq:xiv} \xi_{V} :=\frac{1}{\kappa} \left( \frac { V_{, \phi }  V_{, \phi \phi \phi} }{V^2} \right)^{\frac{1}{2}},
\end{equation}
\begin{equation} \label{eq:sigmav}
\sigma_{V} := \frac{1}{\kappa} \left( \frac { (V_{, \phi })^2  V_{, \phi \phi \phi \phi} }{V^3} \right)^{\frac{1}{3}}.
\end{equation}
In general, the following definition is satisfied:
\begin{equation} \label{eq:parametry-V-def}
{}^{n}\beta_{V} :=
\left\lbrace 
\begin{array}{l}
\epsilon_V \text{ for } n=0, \\
\frac{1}{\kappa} \left( \frac {( V_{, \phi} )^{n-1} \left(\frac{d^{n+1}}{d\phi^{n+1}}V \right)} {V^{n}} \right)^{\frac{1}{n}} \text{ for } n \geq 1 .
\end{array}
\right.
\end{equation}

Furthermore, the following derivatives of the potential slow-roll parameters with respect to $\phi$ are useful:
\begin{eqnarray} 
\label{eq:pochodna1-epsilonV}
\frac{d}{d \phi} \epsilon_V &=& 2 \kappa \frac{V}{V_{, \phi}} (- 2 \epsilon_V^2 + \epsilon_V \eta_V) \\
\label{eq:pochodna1-etaV}
\frac{d}{d \phi} \eta_V &=& \kappa \frac{V}{V_{, \phi}} \left(- 2 \epsilon_V \eta_V + \xi_V^2 \right), \\
 \label{eq:pochodna1-xiV}
\frac{d}{d \phi} \xi_V^2 &=& \kappa \frac{V}{V_{, \phi}} \left(- 4 \epsilon_V \xi_V^2 + \eta_V \xi_V^2 + \sigma_V^3  \right), \\
\label{eq:pochodna2-epsilonV}
\frac{d^2}{d \phi^2} \epsilon_V &=& 2 \kappa  \left(\eta _V^2+\epsilon _V^2 \left(2 \kappa ^2 \left(\xi _V^2-\eta _V^2\right)+6\right)+4 \kappa ^2 \eta _V \epsilon _V^3-6 \eta _V \epsilon _V\right), \\
\label{eq:pochodna2-etaV}
\frac{d^2}{d \phi^2} \eta_V &=& \kappa  \left(-2 \eta _V^2-\xi _V^2+4 \kappa ^2 \epsilon _V^2 \left(\eta _V-\xi _V\right) \left(\eta _V+\xi _V\right)+2 \epsilon _V 
\left(2 \eta _V+\kappa ^2 \sigma _V^3\right)\right), \\
\label{eq:pochodna2-xiV}
\frac{d^2}{d \phi^2} \xi_V^2 &=& \kappa \left(-6 \eta _V \xi _V^2+\sigma _V^2+2 \epsilon _V \left(\kappa ^2 \left(\eta _V \sigma _V^2 \left(3 \sigma _V-1\right)
+\tau _V^4\right)+\kappa ^2 \xi _V^4+6 \xi _V^2\right) \right. \nonumber \\
&-& 20  \left. \kappa ^2 \sigma _V^3 \epsilon _V^2\right), \\
\label{eq:pochodna-delta-1}
\delta_{H, \phi} &=& \frac{3}{\kappa} \frac{\frac{d}{d\phi} H^2}{\rho_c (1- 2 \delta_H)},\\
\label{eq:pochodna-delta-2}
\delta_{H, \phi \phi} &=& \frac{3}{\kappa} \frac{\frac{d^2}{d\phi^2} H^2}{\rho_c (1- 2 \delta_H)}
+ \frac{6}{\kappa} \frac{\delta_{H, \phi} \frac{d}{d \phi} H^2}{\rho_c (1- 2 \delta_H)^2}.
\end{eqnarray}

The method which will be presented below allows to determine solution of \eqref{eq:H-kwant}
to an arbitrary order in the potential slow-roll parameters. 

In order to make the calculations easier, we introduce an auxiliary parameter:
\begin{equation} \label{eq:f} 
f := \frac{ H_{, \phi } }{H} = \frac{\frac{d}{d \phi} H^2}{2 H^2}
\end{equation}
and write the first Hubble slow-roll parameter in terms of it:
\begin{equation} \label{eq:epsilonH-f}
\epsilon_H = \frac{2}{\kappa} f^2.
\end{equation}

Now, we will proceed iteratively, employing the following steps:
\begin{enumerate}
\item Start from $H^2$ to $n$-th order in potential slow-roll parameters and in parameter 
$\delta_H$ (the approximation for $n = 0$ is $H^2 = \frac{\kappa}{3} V$). 
\item Using this approximation, expressions for derivatives of potential slow-roll 
parameters \eqref{eq:pochodna1-epsilonV}-\eqref{eq:pochodna2-xiV} and values 
of derivatives of $\delta_H$ to the $(n-1)$-th order obtain derivatives of $\delta_H$ 
to the $n$-th order from \eqref{eq:pochodna-delta-1} and \eqref{eq:pochodna-delta-2}. 
The first non-zero values are for $n=2$: $\delta_{H, \phi} \simeq \frac{V_{, \phi}}{V} 
\delta_H$, $\delta_{H, \phi \phi} \simeq \kappa \eta_V \delta_H$.
\item Calculate an approximate value of $f$ from \eqref{eq:f} and then $\epsilon_H$ 
up to the $(n+1)$-th order from \eqref{eq:epsilonH-f}.
\item Finally, calculate $H^2$ to the $(n+1)$-th order from \eqref{eq:H-kwant-epsilonH}.
\end{enumerate}

The above procedure is valid because of the following:
\begin{lemat} \label{lemat2}
For any expression which is a product of slow-roll parameters and quantities independent 
on $\phi$, and it is of the order $k \geq 0$ in slow-roll parameters, the derivative of this 
expression with respect to $\phi$ is at least of the order $k$. 
\end{lemat}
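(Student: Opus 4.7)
The plan is to reduce the statement to a local claim about each individual slow-roll parameter, and then verify that local claim by direct inspection of the derivative formulas already collected in \eqref{eq:pochodna1-epsilonV}--\eqref{eq:pochodna1-xiV} and \eqref{eq:pochodna-delta-1}.

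First, by linearity of $d/d\phi$ and since $\phi$-independent quantities factor out as inert constants, it suffices to establish the lemma for a single monomial $P = p_1^{a_1}\cdots p_r^{a_r}$ in the slow-roll parameters $p_i\in\{\epsilon_V,\eta_V,\xi_V,\sigma_V,\delta_H\}$, whose slow-roll order is $k=\sum_i a_i\,\mathrm{ord}(p_i)$ (with the standard convention $\mathrm{ord}(\xi_V)=\mathrm{ord}(\sigma_V)=1$ inherited from \eqref{eq:xiv}--\eqref{eq:sigmav}, so that $\xi_V^2$ is order $2$ and $\sigma_V^3$ is order $3$). Leibniz's rule then gives
\begin{equation}
\frac{dP}{d\phi}=\sum_i a_i\,p_1^{a_1}\cdots p_i^{a_i-1}\cdots p_r^{a_r}\,\frac{dp_i}{d\phi},
\end{equation}
so each summand has slow-roll order $(k-\mathrm{ord}(p_i))+\mathrm{ord}(dp_i/d\phi)$. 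The lemma therefore reduces to the per-parameter statement $\mathrm{ord}(dp/d\phi)\geq\mathrm{ord}(p)$ for every one of the basic slow-roll parameters.

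To verify this per-parameter statement for the potential slow-roll parameters, I would substitute into \eqref{eq:pochodna1-epsilonV}--\eqref{eq:pochodna1-xiV} the identity $V/V_{,\phi}=\pm (2\kappa\epsilon_V)^{-1/2}$, which follows from \eqref{eq:epsilonv} and carries slow-roll order $-1/2$. Combined with the displayed polynomial factors on the right-hand sides (of orders $2$, $2$, and $3$ respectively), this yields $\mathrm{ord}(d\epsilon_V/d\phi)\geq 3/2$, $\mathrm{ord}(d\eta_V/d\phi)\geq 3/2$, and $\mathrm{ord}(d\xi_V^2/d\phi)\geq 5/2$, each strictly exceeding the order of the differentiated parameter. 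For $\delta_H$, formula \eqref{eq:pochodna-delta-1} together with the leading-order estimates $H^2\approx\kappa V/3$ and $V_{,\phi}/V=\pm\sqrt{2\kappa\epsilon_V}$ gives $\delta_{H,\phi}\sim\sqrt{\epsilon_V}\,\delta_H$, of slow-roll order $3/2\geq 1=\mathrm{ord}(\delta_H)$.

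No real obstacle is anticipated: the entire content of the lemma is the observation that differentiation with respect to $\phi$ effectively pulls out a factor of $V_{,\phi}/V\sim\sqrt{\epsilon_V}$, which is itself of positive slow-roll order and so cannot lower the overall order of any expression. The only mild organizational point is that the derivative of a single slow-roll parameter contains a $V/V_{,\phi}$ factor which is not, strictly speaking, a product of slow-roll parameters in the original sense; rewriting it as $\epsilon_V^{-1/2}$ (times $\phi$-independent constants) turns the output into a sum of monomials with possibly half-integer exponents, on which the slow-roll order is still well defined. All the ingredients are already present in \eqref{eq:pochodna1-epsilonV}--\eqref{eq:pochodna-delta-2}, so once the per-parameter estimate is in place the general result follows from Leibniz's rule without any further induction.
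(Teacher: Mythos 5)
Your overall strategy---reduce to a single monomial via Leibniz's rule, then show that differentiating any one slow-roll parameter raises its order by $\tfrac{1}{2}$ because differentiation effectively pulls out a factor $V_{,\phi}/V\sim\sqrt{\epsilon_V}$---is exactly the mechanism of the paper's proof; your Leibniz step corresponds to \eqref{eq:dluga-pochodna-slow-roll}--\eqref{eq:krotka-pochodna-slow-roll}, and your concluding remark is precisely the paper's key observation. Your separate treatment of $\delta_H$ via \eqref{eq:pochodna-delta-1} is a harmless addition: the lemma and its printed proof concern only the potential slow-roll parameters, with the derivatives of $\delta_H$ handled independently in step 2 of the iteration.

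The one genuine shortfall is in the per-parameter base case. You verify $\mathrm{ord}(dp/d\phi)\geq\mathrm{ord}(p)+\tfrac{1}{2}$ only for $\epsilon_V$, $\eta_V$ and $\xi_V^2$ by citing \eqref{eq:pochodna1-epsilonV}--\eqref{eq:pochodna1-xiV}; there is no displayed formula for $d(\sigma_V^3)/d\phi$, and, more importantly, the slow-roll parameters form the infinite tower ${}^{n}\beta_{V}$ of \eqref{eq:parametry-V-def}, which is not closed under differentiation within any finite sublist (each derivative brings in the next parameter: $\sigma_V^3$ appears in $d\xi_V^2/d\phi$, and $\tau_V^4$ already shows up in \eqref{eq:pochodna2-xiV}). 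Since the lemma quantifies over arbitrary products of slow-roll parameters, the base case must be established for every $n$, not just the first few. The paper closes this with a single general-$n$ computation directly from \eqref{eq:parametry-V-def}, yielding $\frac{d}{d\phi}\,{}^{n}\beta_{V}\sim {}^{n}\beta_{V}\frac{V_{,\phi}}{V}+{}^{n}\beta_{V}\frac{V_{,\phi\phi}}{V_{,\phi}}+\left({}^{n+1}\beta_{V}\right)^{n+1}\left({}^{n}\beta_{V}\right)^{1-n}\frac{V}{V_{,\phi}}$, each term being of order $\tfrac{3}{2}$ by the same $V/V_{,\phi}\sim\epsilon_V^{-1/2}$ bookkeeping you already use. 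Adding that one estimate would make your argument complete and essentially identical to the paper's.
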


\begin{proof}
We will omit numerical constants, writing ''$\sim$''. For $k = 0$ after differentiation 
we obtain $0$, so the lemma is satisfied. For $k = 1$ there are two cases dependent 
on value of ''$n$'' in ${}^{n}\beta_{V}$. For $n = 0$:

\begin{equation} \label{eq:pochodna_epsilon_V}
\frac{d}{d \phi} {}^{0}\beta_{V} = \epsilon_{V, \phi} 
= \frac{ V_{, \phi } }{V} \left( \eta_V - 2 \epsilon_V \right),
\end{equation}
which is of order $\frac{3}{2}$ in slow-roll parameters. Next, for $n \geq 1$:
\begin{eqnarray}
\frac{d}{d \phi} {}^{n}\beta_{V} &\sim& V^{-2}  V_{, \phi }  ( V_{, \phi } )^{\frac{n-1}{n}} 
\left(\frac{d^{n+1}}{d\phi^{n+1}}V \right)^{\frac{1}{n}} 
+ V^{-1} ( V_{, \phi } )^{\frac{n-1}{n} - 1}  V_{, \phi \phi}  \left(\frac{d^{n+1}}{d\phi^{n+1}}V \right)^{\frac{1}{n}}  \nonumber  \\
&+& V^{-1} ( V_{, \phi } )^{\frac{n-1}{n}} \left(\frac{d^{n+2}}{d\phi^{n+2}}V \right) \left(\frac{d^{n+1}}{d\phi^{n+1}}V \right)^{\frac{1}{n} - 1} \nonumber \\
&\sim& {}^{n}\beta_{V} \frac{ V_{, \phi } }{V}
+ {}^{n}\beta_{V} \frac{ V_{, \phi \phi} }{V_{, \phi } }
+ \left({}^{n+1}\beta_{V} \right)^{n+1}
\left({}^{n}\beta_{V}\right)^{1 - n} 
\frac{V}{ V_{, \phi }},
\end{eqnarray}
which is also of order $\frac{3}{2}$ in slow-roll parameters. Let us consider expression of arbitrary order $k \geq 2$:
\begin{equation}
\left({}^{n_1}\beta_{V} \right)^{k_1} \cdot \left({}^{n_2}\beta_{V} \right)^{k_2} \cdot \ldots \cdot \left({}^{n_s}\beta_{V} \right)^{k_s},
\end{equation}
where $k_1 + k_2 + \ldots + k_s = k$. 
We calculate:
\begin{equation}
 \label{eq:dluga-pochodna-slow-roll}
\begin{split}
& \frac{d}{d \phi} \left[ \left({}^{n_1}\beta_{V} \right)^{k_1} \cdot \left({}^{n_2}\beta_{V} \right)^{k_2} \cdot \ldots \cdot \left({}^{n_{s-1}}\beta_{V} \right)^{k_{s-1}} \cdot \left({}^{n_s}\beta_{V} \right)^{k_s} \right] = \\
&=\left\lbrace \frac{d}{d \phi} \left[ \left({}^{n_1}\beta_{V} \right)^{k_1} \right]  \right\rbrace \cdot \left[  \left({}^{n_2}\beta_{V} \right)^{k_2} \cdot \ldots \cdot \left({}^{n_s}\beta_{V} \right)^{k_s} \right]  \\
&
+ \left\lbrace \frac{d}{d \phi} \left[ \left({}^{n_2}\beta_{V} \right)^{k_2} \right] \right\rbrace \cdot \left[  \left({}^{n_1}\beta_{V} \right)^{k_1} \cdot \ldots \cdot \left({}^{n_s}\beta_{V} \right)^{k_s} \right] 
+ \ldots  \\
& + \left\lbrace \frac{d}{d \phi} \left[ \left({}^{n_s}\beta_{V} \right)^{k_s} \right] \right\rbrace \cdot \left[ \left({}^{n_1}\beta_{V} \right)^{k_1} \cdot \left({}^{n_2}\beta_{V} \right)^{k_2}  \cdot \ldots \cdot \left({}^{n_s-1}\beta_{V} \right)^{k_s-1} \right] .
\end{split}
\end{equation}
Notice that:
\begin{equation} \label{eq:krotka-pochodna-slow-roll}
\frac{d}{d \phi} \left[ \left({}^{n_i}\beta_{V} \right)^{k_i} \right] 
= k_i  \left({}^{n_i}\beta_{V} \right)^{k_i - 1} \cdot 
\frac{d}{d \phi} \left[ \left({}^{n_i}\beta_{V} \right) \right],
\end{equation}
which is of order $\frac{3}{2} + k_i - 1 = k_i + \frac{1}{2}$ in slow-roll 
parameters. By virtue of \eqref{eq:krotka-pochodna-slow-roll} every 
element of sum \eqref{eq:dluga-pochodna-slow-roll} is of order $\frac{3}{2}
+\left( \sum_{i = 1}^{s} k_i \right) -1 = \frac{1}{2} + k > k$, which completes the 
proof. \end{proof}

The lemma guarantees that when we differentiate a given expression, the result is 
of at least the same order in slow-roll parameters, so we do not lose any terms.

In order to better visualize technology of the method, let us explicitly calculate (up to the 
second order) expressions for $\epsilon_H$ and $H^2$ in terms of the potential slow-roll
parameters. 

The 0th order expression for $H^2$ is $H^2=\frac{\kappa}{3} V$.  Applying 
this to the definition of $f$ we obtain $f=\frac{1}{2}  \frac{V_{,\phi}}{V}$, 
based on which $\epsilon_H = \frac{2}{\kappa} f^2=\frac{1}{2\kappa}
=\frac{1}{2\kappa}\left( \frac { V_{, \phi } }{V} \right)^{2} = \epsilon_V$. 
Applying this to (\ref{eq:H-kwant-epsilonH}), we obtain the 1st order 
expression:
\begin{equation}
H^2 = \frac{\kappa}{3} V \left(1 + \frac{1}{3} \epsilon_V - \delta_H\right).  
\label{H1V}
\end{equation}
Now, let us repeat the procedure with use of  (\ref{H1V}) in expression 
(\ref{eq:f}). We obtain  $f=\frac{1}{2}  \frac{V_{,\phi}}{V}\left(1-\frac{2}{3}\epsilon_V
+\frac{1}{3}\eta_V-\delta_H \right) $, which leads to 
$\epsilon_H = \epsilon_V - \frac{4}{3} \epsilon_V^2+ \frac{2}{3} \epsilon_V \eta_V- 2 \epsilon_V \delta_H $. By substituting this expression in (\ref{eq:H-kwant-epsilonH}),
we find that, up to the second order:
\begin{equation}
H^2 = \frac{\kappa}{3} V \left(1 + \frac{1}{3} \epsilon_V  - \delta_H
- \frac{1}{3} \epsilon_{V}^{2} + \frac{2}{9} \epsilon_V \eta_V \right).  
\label{H2V}
\end{equation}
  
Following this procedure, we find the 4th order expression of the Hubble slow-roll 
parameter 
\begin{equation}
\begin{split}
\epsilon_H & = \epsilon_V  
- \frac{4}{3} \epsilon_V^2+ \frac{2}{3} \epsilon_V \eta_V- 2 \epsilon_V \delta_H 
+ \frac{32}{9} \epsilon_V^3 - \frac{10}{3} \epsilon_V^2 \eta_V + \frac{5}{9} \epsilon_V \eta_V^2 + \frac{2}{9} \epsilon_V \xi_V^2  + \epsilon_V^2 \delta_H   \\
& - \frac{2}{3} \epsilon_V \eta_V \delta_H - \epsilon_V \delta_H^2 
- \frac{44}{3} \epsilon_V^4 + \frac{530}{27} \epsilon_V^3 \eta_V  - \frac{62}{9} \epsilon_V^2 \eta_V^2 + \frac{14}{27} \epsilon_V \eta_V^3  - \frac{16}{9} \epsilon_V^2 \xi_V^2 \\
& + \frac{2}{3} \epsilon_V \eta_V \xi_V^2 + \frac{2}{27} \epsilon_V \sigma_V^3 + \frac{13}{18} \epsilon_V^3 \delta_H- \frac{4}{9} \epsilon_V^2 \eta_V \delta_H + \frac{8}{3} \epsilon_V^2 \delta_H^2- \frac{4}{3} \epsilon_V \eta_V \delta_H^2 
+ \dots
\end{split}
\end{equation}
which by applying to  (\ref{eq:H-kwant-epsilonH}), leads to the following 4th 
order expression:
\begin{equation}\label{eq:H-kwant-wynik}
\begin{split}
H^2 &= \frac{\kappa}{3} V \left(
1 + \frac{1}{3} \epsilon_V  - \delta_H
- \frac{1}{3} \epsilon_{V}^{2} + \frac{2}{9} \epsilon_V \eta_V 
+ \frac{25}{27} \epsilon_V^3 - \frac{26}{27} \epsilon_V^2 \eta_V  + \frac{5}{27} \epsilon_V \eta_V^2 + \frac{2}{27} \epsilon_V \xi_V^2  \right. \\
&- \left. \frac{4}{9} \epsilon_V^2 \delta_H + \frac{2}{9} \epsilon_V \eta_V \delta_H 
- \frac{328}{81} \epsilon_V^4 + \frac{460}{81} \epsilon_V^3 \eta_V - \frac{172}{81} \epsilon_V^2 \eta_V^2 + \frac{14}{81} \epsilon_V \eta_V^3 - \frac{44}{81} \epsilon_V^2 \xi_V^2 \right. \\
& \left. + \frac{2}{9} \epsilon_V \eta_V \xi_V^2 + \frac{2}{81} \epsilon_V \sigma_V^3
+ \frac{109}{54} \epsilon_V^3 \delta_H - \frac{56}{27} \epsilon_V^2 \eta_V \delta_H + \frac{10}{27} \epsilon_V \eta_V^2 \delta_H + \frac{4}{27} \epsilon_V \xi_V^2 \delta_H \right. \\
& \left. - \frac{5}{9} \epsilon_V^2 \delta_H^2 + \frac{2}{9} \epsilon_V \eta_V \delta_H^2
+ \dots \right).
\end{split}
\end{equation}

In classical limit ($\delta_H = 0$), the result is consistent with the one derived in Ref. \cite{Liddle:1994dx}.

\section{An attractor}

The slow-roll evolution is typically realized by the phase-space trajectories
approaching attractor of the dynamics. Expect for some special cases, finding 
expression for the attractor solution is a difficult task. One of them is 
the asymptotic attractor for the classical FRW dynamics with a single 
massive scalar field. In that case the attractors are asymptotically (far from 
the centre of the coordinate system) given by two trajectories which are 
parallel to the $\dot{\phi}=0$ axis at the $(\phi,\dot{\phi})$ plane.  In this case, 
the sign of  $\frac{d \dot{\phi}}{dt}$ depends on whether a trajectory is 
approaching the attractor either from below or from above. The transition 
line between the regions of different signs of $\frac{d \dot{\phi}}{dt}$ defines 
the attractor condition: 
\begin{equation} 
\frac{d \dot{\phi}}{dt}=-3H\dot\phi-V_{, \phi } \approx 0, 
\label{attract_mass}
\end{equation}
where we used the Klein-Gordon equation (\ref{eq:klein-gordon}). Under some 
conditions, validity of the attractor condition (\ref{attract_mass}) can be extrapolated 
to different models. Here, we address this issue in the context of the model being 
a subject of this paper. For this purpose, let us rewrite the Klein-Gordon equation 
(\ref{eq:klein-gordon}) in the following form:
\begin{equation}
\underbrace{\frac{\ddot{\phi}}{H\dot{\phi}}}_{:=A}+\underbrace{3+\frac{V_{,\phi}}{H\dot{\phi}}}_{:=B}=0.
\end{equation}
The attractor condition (\ref{attract_mass}) is satisfies if both $A$ and $B$ factors are
independently sufficiently small, $|A| \ll 1$ and $|B| \ll 1$. With use of Eq. \ref{ddotphiC}
one can find exact expression for the factor $A$  as a function of slow-roll expansion parameters:
\begin{equation}
A = -4\delta_H\epsilon_{\epsilon_H}\frac{(1-\delta_H)}{(1-2\delta_H)^3}-\eta_H \frac{1}{(1-2\delta_H)}
= -\eta_H +\mathcal{O}(2). 
\end{equation}
The condition $|A| \ll 1$ is, therefore, satisfied in the slow-roll regime. On the other hand, 
leading contribution to $B$ can be written as 
\begin{equation}
B =\eta_H-\epsilon_{H}+\mathcal{O}(2),  
\end{equation}
which also satisfies the condition  $|B| \ll 1$ in the slow-roll regime. One can, therefore, 
conclude that in the slow-roll regime, the approximate attractor condition (\ref{attract_mass}) 
is satisfied also in case when the holonomy corrections are present. The value of the 
$\delta_H$ parameter has to be, however, sufficiently small.    

With use of Eq. (\ref{attract_mass}) and expression on $H(\phi)$ given by Eq. \ref{eq:H-kwant-wynik}, 
the attractor equation can be written as:  
\begin{equation} \label{attractorEq} 
\begin{split}
\dot{\phi}  \simeq  - \frac{V_{,\phi}}{3 H(\phi)} &= \mp \sqrt{\frac{2}{3} V \epsilon_V}
\left(1-\frac{\epsilon _V}{6}+\frac{\delta _H}{2}
+\frac{5 \epsilon _V^2}{24}-\frac{\epsilon _V \eta _V}{9}-\frac{\epsilon _V \delta _H}{4}+\frac{3 \delta _H^2}{8}
 -\frac{241 \epsilon _V^3}{432} \right. \\
& \left.+\frac{29}{54} \epsilon _V^2 \eta _V-\frac{5}{54} \epsilon _V \eta _V^2-\frac{1}{27} \epsilon _V \xi
   _V^2+\frac{83}{144} \epsilon _V^2 \delta _H-\frac{5}{18} \epsilon _V \eta _V \delta _H-\frac{5}{16} \epsilon _V \delta _H^2  \right. \\
& \left.
   +\frac{5 \delta_H^3}{16}
   + \frac{299 \epsilon _V^4}{128}  -\frac{2047}{648} \epsilon _V^3 \eta _V+\frac{365}{324} \epsilon _V^2 \eta _V^2-\frac{7}{81} \epsilon _V \eta
   _V^3+\frac{47}{162} \epsilon _V^2 \xi _V^2 \right. \\
& \left.-\frac{1}{9} \epsilon _V \eta _V \xi _V^2 
   -\frac{1}{81} \epsilon _V \sigma _V^3 
   -\frac{1783}{864} \epsilon _V^3 \delta _H 
   +\frac{211}{108} \epsilon _V^2 \eta _V \delta _H-\frac{35}{108} \epsilon _V \eta _V^2 \delta _H \right. \\
& \left. -\frac{7}{54}
   \epsilon _V \xi _V^2 \delta _H+\frac{637}{576} \epsilon _V^2 \delta _H^2-\frac{35}{72} \epsilon _V \eta _V \delta _H^2
   -\frac{35}{96}
   \epsilon _V \delta _H^3+\frac{35 \delta _H^4}{128}+\dots \right). 
\end{split}
\end{equation}

\section{A case study}

The purpose of this section is to verify validity of the obtained approximations using the 
example of the scalar field with quadratic potential function:  
\begin{equation}
V(\phi) = \frac{1}{2} m^2 \phi^2.  
\label{MassPot}
\end{equation}
For such potential only the first two potential-type slow-roll parameters are non-vanishing:
\begin{equation}
\epsilon_V = \eta_V =  \frac{2}{\kappa} \frac{1}{\phi^2}. 
\label{MassSR}
\end{equation} 

In case of the quadratic potential, it is convenient to work with the variables \cite{Mielczarek:2010bh}:
\begin{equation}
x := \frac{m\phi}{\sqrt{2\rho_c}}, \ \ \text{and} \ \  y := \frac{\dot{\phi}}{\sqrt{2\rho_c}},
\label{xy}
\end{equation} 
which allow us to reformulate expression for the energy density into the form:
\begin{equation}
x^2+y^2 = \frac{\rho}{\rho_c} \leq 1.
\label{circcond} 
\end{equation} 

In the $x-y$ parametrization, the phase trajectories of the scalar field are 
confined in the unit circle, where the boundary corresponds to the bounce, 
$\rho=\rho_c$. 

The slow-roll approximation is valid up to $\epsilon_V  \approx 1$, which can be considered as 
the condition for termination of the inflationary period. This condition, for the massive potential case 
studied in this section, can be translated into an adequate range of $x$ at which the slow-roll 
approximation is valid. Namely, combining $\epsilon_V  \approx 1$ together with (\ref{circcond}), 
we find that $|x| \in [x_{\text{min}},1]$, where $x_{\text{min}} = 
\frac{1}{\sqrt{8\pi}}\left( \frac{m m_{\text{Pl}}}{\rho_c} \right) $.

Applying Eq. \ref{MassPot} and Eqs. \ref{MassSR}  to expression (\ref{attractorEq}) 
we obtain the following attractor equation:
\begin{equation}
y =  \mp \frac{1}{2\sqrt{6 \pi }}\left(  \frac{m m_{\text{Pl}}}{\sqrt{\rho_c}} \right) 
\left(1- \frac{1}{6} \frac{1}{8\pi}\left(  \frac{m^2 m^2_{\text{Pl}}}{\rho_c} \right)\frac{1}{x^2}+
\frac{1}{2}\left(x^2+\frac{1}{24\pi} \frac{m^2 m^2_{\text{Pl}}}{\rho_c}\right) +\dots \right),
\label{attractEq}
\end{equation}
where, for clarity, we presented only the leading ($\mathcal{O}(\epsilon_V,\eta_H)$) contributions. 
Derivation of the formula (\ref{attractEq}) requires expressing $\delta_H$ as a function the 
potential slow-roll parameters and the potential $V$. Consequently, with use of Eq. \ref{eq:dot-phi-H-phi} 
one can find that: 
\begin{equation}
\begin{split}
\delta_H  =  \frac{V}{\rho_c} & \left[
1+\frac{1}{3}\epsilon_V  
-\frac{\epsilon _V^2}{3}+\frac{2 \epsilon _V \eta _V}{9}+\frac{\epsilon _V \delta _H}{3}
+ \frac{25 \epsilon _V^3}{27}-\frac{26}{27} \epsilon _V^2 \eta _V+\frac{5}{27} \epsilon _V \eta _V^2+\frac{2}{27} \epsilon _V \xi
   _V^2 \right. \\
& \left.   
   -\frac{7}{9} \epsilon _V^2 \delta _H+\frac{4}{9} \epsilon _V \eta _V \delta _H+\frac{1}{3} \epsilon _V \delta _H^2
-\frac{109 \epsilon _V^4}{27}+\frac{460}{81} \epsilon _V^3 \eta _V-\frac{172}{81} \epsilon _V^2 \eta _V^2+\frac{14}{81} \epsilon _V \eta
   _V^3 \right. \\
& \left.   
 -\frac{44}{81} \epsilon _V^2 \xi _V^2+\frac{2}{9} \epsilon _V \eta _V \xi _V^2+\frac{2}{81} \epsilon _V \sigma _V^3+\frac{53}{18}
   \epsilon _V^3 \delta _H-\frac{82}{27} \epsilon _V^2 \eta _V \delta _H+\frac{5}{9} \epsilon _V \eta _V^2 \delta _H \right. \\
& \left.   
 +\frac{2}{9} \epsilon _V
   \xi _V^2 \delta _H-\frac{4}{3} \epsilon _V^2 \delta _H^2+\frac{2}{3} \epsilon _V \eta _V \delta _H^2+\frac{1}{3} \epsilon _V \delta _H^3+\dots
 \right]. 
\end{split}
\end{equation}

In Fig. \ref{PP} we compare the attractor trajectory (\ref{attractEq}), calculated at different 
orders of approximation, with representative dynamical trajectories obtained numerically. 
\begin{figure}[ht!]
\centering
\includegraphics[width=12cm,angle=0]{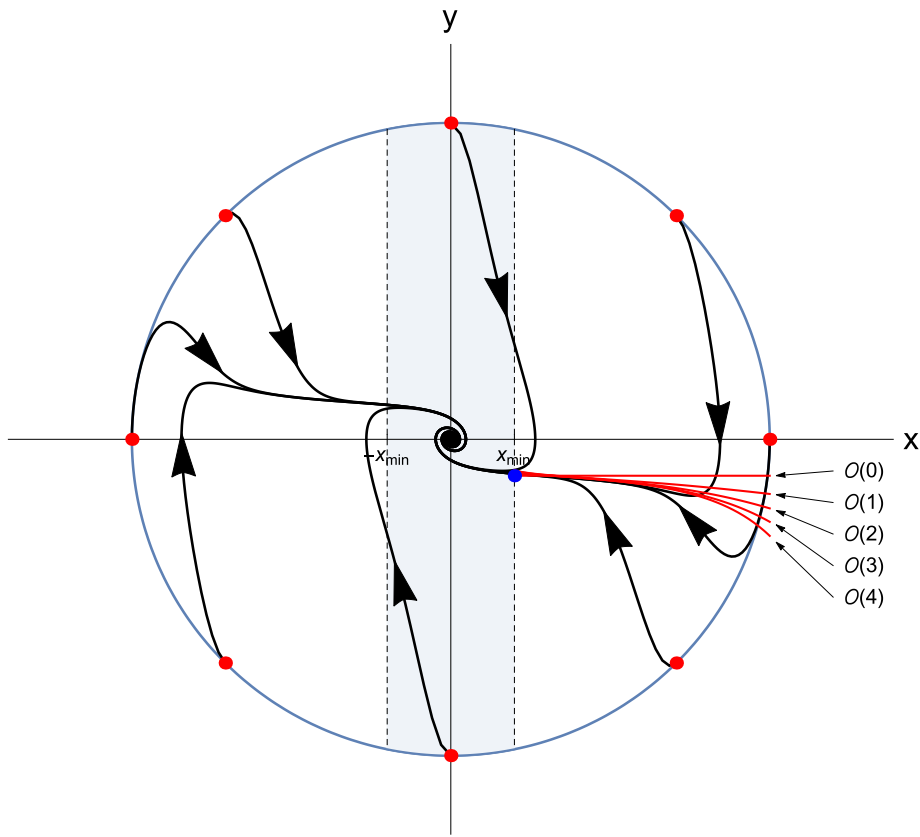}
\caption{Exemplary dynamical trajectories for the model with a massive scalar field.
Here, we fixed $m=m_{\text{Pl}}$ and $\rho_c=m^4_{\text{Pl}}$. The red points at the boundary 
represent the values at which the initial conditions have been imposed. In the shadowed region 
the slow-roll approximation is violated, $\epsilon_V>1$. The blue point at $\left( x_{\text{min}}, 
- \frac{1}{2\sqrt{6 \pi }}\left(\frac{m m_{\text{Pl}}}{\sqrt{\rho_c}} \right)  \right)$ indicates 
beginning of the  range of applicability of the slow-roll approximations discussed in this article. 
The trajectories starting from the points $(1,0)$ and $(0,-1)$ are the attractors of the dynamics.} 
\label{PP}
\end{figure}

Comparison of the analytical prediction with the numerical results confirms validity of the 
obtained formulas in the range from $x \approx x_{\text{min}}$ to $x \approx  0,7$.
For $x<x_{\text{min}}$ the slow-roll conditions are broken. On the other hand, at $x>0,7$
the slow-roll conditions are still valid while the approximation significantly departs
from the attractor trajectory (the trajectory which begins the point $(1,0)$). An explanation 
of this feature lies in limited range of applicability of the attractor condition, Eq. \ref{attract_mass}. 
Furthermore, the observed departure coincides with the value at which the Hubble factor 
takes its maximal value ($\rho=\rho_c/2$), which is $x=\sqrt{2}/2 \approx0,707$ for $y=0$. 
However, the region $x>\sqrt{2}/2$ for $y\rightarrow 0$ corresponds to the Euclidean sector 
of the theory, for which analysis of the slow-roll approximation may have limited practical 
relevance. 

In the presented example, we intentionally consider the case of large inflaton mass $m=m_{\text{Pl}}$.
Bigger the inflaton mass, shorter the inflationary period and less distinct the attractor solution. 
In such case differences between discussed orders of approximation become more significant, 
making comparison of the expressions more transparent. Decreasing the mass drives the attractor 
closer to the $y=0$ line, and improving accuracy of the approximation. Already a one order of 
magnitude decrease of the mass value leads to a significant improvement of the accuracy of 
the approximation, even beyond $x \approx  0,7$. 
 
\section{Summary}

In this article we have constructed systematic slow-roll expansion for the loop quantum 
cosmology model with a single scalar field. The approximations were studied in case of 
the Hubble type, horizon flow type and potential type slow-roll parameters. In our approach,
standard definitions of the slow-roll parameters were used. The developed methodology
has been employed to derive expressions for useful cosmological quantities up to the fourth 
order in both the slow-roll parameters and the parameter $\delta_H$. Validity of the obtained 
results was confirmed by confrontation with numerically determined trajectories for the 
model with the quadratic potential. It has been shown that the obtained analytical approximation
for the attractor trajectory converges with the numerical results in the Lorentzian domain 
($\rho<\rho_c/2$) if the slow-roll conditions are satisfied. 

The primary goal of the performed studies was to construct analytical approximation of the 
inflationary background dynamics in loop quantum cosmology, which is necessary for the 
further studies of generation of inhomogeneities. The corrections to inflation due to the 
presence of $\delta_H$ are very subtle. Therefore, any attempt to confront the predictions 
with cosmological data will require taking into account approximations of sufficiently high order. 
In Appendix we collected some additional formulas, which may be useful in the analysis 
of the perturbative inhomogeneities.  

\section*{Acknowledgements}

This work is supported by the National Science Centre (Poland), \\
projects DEC-2013/09/B/ST2/03455 and DEC-2014/13/D/ST2/01895. 

\section*{Appendinx}

The results obtained in this article may be used to derive slow-roll approximations 
for various cosmologically relevant quantities. Here, the expression for the 
scale factor $a$ as well as $\frac{a'}{a}$ and $\frac{a''}{a}$ are discussed. All of these 
functions find application in the studies of cosmological inhomogeneities. 

Let us define the conformal time, which can be expressed as follows
\begin{equation} 
\label{eq:def-czas-konforemny}
\begin{split}
\tau & := \int \frac{dt}{a}
= \int \frac{d\tau}{d\mathcal{H}} \mathcal{H}^2 \frac{d\mathcal{H}}{\mathcal{H}^2}
= - \int \frac{d\tau}{d\mathcal{H}} \mathcal{H}^2 d \left(\frac{1}{\mathcal{H}} \right)
= - \int \frac{1}{1 + \frac{\dot{H}}{H^2}} d \left(\frac{1}{\mathcal{H}} \right) ,
\end{split}
\end{equation}
where $\mathcal{H} := aH$. Integration of the last integral in (\ref{eq:def-czas-konforemny}) by parts 
leads to 
\begin{equation} \label{eq:def-czas-konforemny-2}
\tau = - \frac{1}{1 + \frac{\dot{H}}{H^2}} \frac{1}{\mathcal{H}} \left( 1 
- \frac{2 \left( \frac{\dot{H}}{H^2} \right)^2 - \frac{\ddot{H}}{H^3}}{\left( 1 + \frac{\dot{H}}{H^2} \right)^2} \right)^{-1}.
\end{equation}

Dividing \eqref{eq:H-dot} by $H^2$ we obtain
\begin{equation} \label{eq:dotH-nad-H2}
\frac{\dot{H}}{H^2} 
= - \frac{\kappa}{2} \frac{\dot{\phi}^2 (1 - 2\delta_H) }{H^2}
= - \frac{1}{2 \kappa (1 - 2\delta_H )} \left( \frac{\frac{d}{d \phi} H^2 }{H^2} \right)^2.
\end{equation}

A similar procedure leads to
\begin{equation} \label{eq:ddotH-nad-H3}
\frac{\ddot{H}}{H^3} = \frac{\delta_{H, \phi}}{\kappa^2 (1 - 2 \delta_H)^3} 
\left( \frac{\frac{d}{d \phi} H^2 }{H^2}\right)^3
+ \frac{\left( \frac{d}{d \phi} H^2 \right)^2  
\left( \frac{d^2}{d \phi^2} H^2 \right)}{\kappa^2 (1 - 2\delta_H )^2 H^6}
- \frac{1}{2 \kappa^2 (1 - 2\delta_H )^2} \left( \frac{\frac{d}{d \phi} H^2}{H^2} \right)^4.
\end{equation}

Combining \eqref{eq:def-czas-konforemny-2}, \eqref{eq:dotH-nad-H2} and 
\eqref{eq:ddotH-nad-H3} one can find the following expression for the scale factor: 
\begin{equation}
\begin{split}
a \simeq &  - \frac{1}{H \tau} 
\left( 1 + \epsilon_V 
+ \frac{11 \epsilon _V^2}{3}-\frac{4 \epsilon _V \eta _V}{3}
+ \frac{5 \epsilon _V^3}{9}+\frac{16}{3} \epsilon _V^2 \eta _V-\frac{19}{9} \epsilon _V \eta _V^2-\frac{4}{9} \epsilon _V \xi _V^2+\frac{7}{3}
   \epsilon _V^2 \delta _H \right. \\
& \left.   
   -\frac{4}{3} \epsilon _V \eta _V \delta _H-\epsilon _V \delta _H^2
+ \frac{281 \epsilon _V^4}{9}-\frac{1000}{27} \epsilon _V^3 \eta _V+\frac{152}{9} \epsilon _V^2 \eta _V^2-\frac{64}{27} \epsilon _V \eta
   _V^3+\frac{8}{3} \epsilon _V^2 \xi _V^2 \right. \\
& \left.   
 -\frac{16}{9} \epsilon _V \eta _V \xi _V^2-\frac{4}{27} \epsilon _V^2 \sigma _V^2-\frac{4}{27}
   \epsilon _V \sigma _V^3-\frac{49}{6} \epsilon _V^3 \delta _H+\frac{119}{9} \epsilon _V^2 \eta _V \delta _H-\frac{38}{9} \epsilon _V \eta
   _V^2 \delta _H \right. \\
& \left.   
 -\frac{8}{9} \epsilon _V \xi _V^2 \delta _H+\frac{4}{3} \epsilon _V^2 \delta _H^2-2 \epsilon _V \delta _H^3
 \right).
\end{split}
\end{equation}

With use of the obtained results one can also find slow-roll approximations for two further useful quantities :
\begin{equation}
\begin{split}
\frac{a'}{a} = & aH \simeq - \frac{1}{\tau} 
\left( 1 + \epsilon_V 
+ \frac{11 \epsilon _V^2}{3}-\frac{4 \epsilon _V \eta _V}{3}
+ \frac{5 \epsilon _V^3}{9}+\frac{16}{3} \epsilon _V^2 \eta _V-\frac{19}{9} \epsilon _V \eta _V^2-\frac{4}{9} \epsilon _V \xi _V^2+\frac{7}{3}
   \epsilon _V^2 \delta _H \right. \\
& \left.   
   -\frac{4}{3} \epsilon _V \eta _V \delta _H-\epsilon _V \delta _H^2
+ \frac{281 \epsilon _V^4}{9}-\frac{1000}{27} \epsilon _V^3 \eta _V+\frac{152}{9} \epsilon _V^2 \eta _V^2-\frac{64}{27} \epsilon _V \eta
   _V^3+\frac{8}{3} \epsilon _V^2 \xi _V^2 \right. \\
& \left.   
 -\frac{16}{9} \epsilon _V \eta _V \xi _V^2-\frac{4}{27} \epsilon _V^2 \sigma _V^2-\frac{4}{27}
   \epsilon _V \sigma _V^3-\frac{49}{6} \epsilon _V^3 \delta _H+\frac{119}{9} \epsilon _V^2 \eta _V \delta _H-\frac{38}{9} \epsilon _V \eta
   _V^2 \delta _H \right. \\
& \left.   
 -\frac{8}{9} \epsilon _V \xi _V^2 \delta _H+\frac{4}{3} \epsilon _V^2 \delta _H^2-2 \epsilon _V \delta _H^3
 \right),
\end{split}
\end{equation}
\begin{equation}
\begin{split}
\frac{a''}{a} =& a^2 H^2 \left( 1 + \frac{\dot{H}}{H^2} \right) + \left( \frac{a'}{a} \right)  
\simeq \frac{1}{\tau^2} 
\left( 2 + 3 \epsilon_V 
+ 16 \epsilon _V^2-6 \epsilon _V \eta _V 
\frac{23 \epsilon _V^3}{3}+\frac{62}{3} \epsilon _V^2 \eta _V \right. \\
& \left.
 -9 \epsilon _V \eta _V^2-2 \epsilon _V \xi _V^2+11 \epsilon _V^2 \delta _H-6
   \epsilon _V \eta _V \delta _H-3 \epsilon _V \delta _H^2
+ \frac{1478 \epsilon _V^4}{9}-\frac{1588}{9} \epsilon _V^3 \eta _V \right. \\
& \left.
 +\frac{670}{9} \epsilon _V^2 \eta _V^2-10 \epsilon _V \eta _V^3+\frac{100}{9}
   \epsilon _V^2 \xi _V^2-\frac{70}{9} \epsilon _V \eta _V \xi _V^2-\frac{16}{27} \epsilon _V^2 \sigma _V^2-\frac{2}{3} \epsilon _V \sigma
   _V^3-\frac{65}{2} \epsilon _V^3 \delta _H \right. \\
& \left.
 +56 \epsilon _V^2 \eta _V \delta _H-18 \epsilon _V \eta _V^2 \delta _H-4 \epsilon _V \xi _V^2
   \delta _H+6 \epsilon _V^2 \delta _H^2-6 \epsilon _V \delta _H^3
\right) .
\end{split}
\end{equation}


\begin{thebibliography}{99}

%\cite{Randall:1995dj}
\bibitem{Randall:1995dj}
  L.~Randall, M.~Soljacic and A.~H.~Guth,
  %``Supernatural inflation: Inflation from supersymmetry with no (very) small parameters,''
  Nucl.\ Phys.\ B {\bf 472} (1996) 377
  %doi:10.1016/0550-3213(96)00174-5
  [hep-ph/9512439].
  %%CITATION = doi:10.1016/0550-3213(96)00174-5;%%
  %182 citations counted in INSPIRE as of 05 Oct 2016

%\cite{HenryTye:2006uv}
\bibitem{HenryTye:2006uv}
  S.-H.~Henry Tye,
  %``Brane inflation: String theory viewed from the cosmos,''
  Lect.\ Notes Phys.\  {\bf 737} (2008) 949
  [hep-th/0610221].
  %%CITATION = HEP-TH/0610221;%%
  %192 citations counted in INSPIRE as of 05 Oct 2016
  
%\cite{Allahverdi:2006iq}
\bibitem{Allahverdi:2006iq}
  R.~Allahverdi, K.~Enqvist, J.~Garcia-Bellido and A.~Mazumdar,
  %``Gauge invariant MSSM inflaton,''
  Phys.\ Rev.\ Lett.\  {\bf 97} (2006) 191304
  doi:10.1103/PhysRevLett.97.191304
  [hep-ph/0605035].
  %%CITATION = doi:10.1103/PhysRevLett.97.191304;%%
  %250 citations counted in INSPIRE as of 10 Oct 2016  
  
%\cite{Baumann:2014nda}
\bibitem{Baumann:2014nda}
  D.~Baumann and L.~McAllister,
  %``Inflation and String Theory,''
  arXiv:1404.2601 [hep-th].
  %%CITATION = ARXIV:1404.2601;%%
  %191 citations counted in INSPIRE as of 05 Oct 2016  
 
%\cite{Mielczarek:2009zw}
\bibitem{Mielczarek:2009zw}
  J.~Mielczarek,
  %``Possible observational effects of loop quantum cosmology,''
  Phys.\ Rev.\ D {\bf 81} (2010) 063503
  %doi:10.1103/PhysRevD.81.063503
  [arXiv:0908.4329 [gr-qc]].
  %%CITATION = doi:10.1103/PhysRevD.81.063503;%%
  %34 citations counted in INSPIRE as of 05 Oct 2016 

%\cite{Barrau:2013ula}
\bibitem{Barrau:2013ula}
  A.~Barrau, T.~Cailleteau, J.~Grain and J.~Mielczarek,
  %``Observational issues in loop quantum cosmology,''
  Class.\ Quant.\ Grav.\  {\bf 31} (2014) 053001
 % doi:10.1088/0264-9381/31/5/053001
  [arXiv:1309.6896 [gr-qc]].
  %%CITATION = doi:10.1088/0264-9381/31/5/053001;%%
  %29 citations counted in INSPIRE as of 01 Jun 2016
  
%\cite{Agullo:2013ai}
\bibitem{Agullo:2013ai}
  I.~Agullo, A.~Ashtekar and W.~Nelson,
  %``The pre-inflationary dynamics of loop quantum cosmology: Confronting quantum gravity with observations,''
  Class.\ Quant.\ Grav.\  {\bf 30} (2013) 085014
  %doi:10.1088/0264-9381/30/8/085014
  [arXiv:1302.0254 [gr-qc]].
  %%CITATION = doi:10.1088/0264-9381/30/8/085014;%%
  %77 citations counted in INSPIRE as of 05 Oct 2016  

%\cite{Mielczarek:2015cja}
\bibitem{Mielczarek:2015cja}
  J.~Mielczarek,
  %``From Causal Dynamical Triangulations To Astronomical Observations,''
  arXiv:1503.08794 [gr-qc].
  %%CITATION = ARXIV:1503.08794;%%
  %1 citations counted in INSPIRE as of 05 Oct 2016  
  
%\cite{Mielczarek:2015xyu}
\bibitem{Mielczarek:2015xyu}
  J.~Mielczarek,
  %``Phenomenology of Causal Dynamical Triangulations,''
  arXiv:1512.08997 [gr-qc].
  %%CITATION = ARXIV:1512.08997;%%  
  
%\cite{Mukohyama:2010xz}
\bibitem{Mukohyama:2010xz}
  S.~Mukohyama,
  %``Horava-Lifshitz Cosmology: A Review,''
  Class.\ Quant.\ Grav.\  {\bf 27} (2010) 223101
  %doi:10.1088/0264-9381/27/22/223101
  [arXiv:1007.5199 [hep-th]].
  %%CITATION = doi:10.1088/0264-9381/27/22/223101;%%
  %160 citations counted in INSPIRE as of 01 Jun 2016  
  
%\cite{Kiefer:2011cc}
\bibitem{Kiefer:2011cc}
  C.~Kiefer and M.~Kraemer,
  %``Quantum Gravitational Contributions to the CMB Anisotropy Spectrum,''
  Phys.\ Rev.\ Lett.\  {\bf 108} (2012) 021301
 % doi:10.1103/PhysRevLett.108.021301
  [arXiv:1103.4967 [gr-qc]].
  %%CITATION = doi:10.1103/PhysRevLett.108.021301;%%
  %48 citations counted in INSPIRE as of 05 Oct 2016  
  
%\cite{Komatsu:2010fb}
\bibitem{Komatsu:2010fb}
  E.~Komatsu {\it et al.} [WMAP Collaboration],
  %``Seven-Year Wilkinson Microwave Anisotropy Probe (WMAP) Observations: Cosmological Interpretation,''
  Astrophys.\ J.\ Suppl.\  {\bf 192} (2011) 18
  %doi:10.1088/0067-0049/192/2/18
  [arXiv:1001.4538 [astro-ph.CO]].
  %%CITATION = doi:10.1088/0067-0049/192/2/18;%%
  %5672 citations counted in INSPIRE as of 05 Oct 2016  
  
%\cite{Ade:2015tva}
\bibitem{Ade:2015tva}
  P.~A.~R.~Ade {\it et al.} [BICEP2 and Planck Collaborations],
  %``Joint Analysis of BICEP2/$Keck ?Array$ and $Planck$ Data,''
  Phys.\ Rev.\ Lett.\  {\bf 114} (2015) 101301
  %doi:10.1103/PhysRevLett.114.101301
  [arXiv:1502.00612 [astro-ph.CO]].
  %%CITATION = doi:10.1103/PhysRevLett.114.101301;%%
  %388 citations counted in INSPIRE as of 01 Jun 2016  
  
 %\cite{Starobinsky:1980te}
\bibitem{Starobinsky:1980te}
  A.~A.~Starobinsky,
  %``A New Type of Isotropic Cosmological Models Without Singularity,''
  Phys.\ Lett.\ B {\bf 91} (1980) 99.
  %doi:10.1016/0370-2693(80)90670-X
  %%CITATION = doi:10.1016/0370-2693(80)90670-X;%%
  %2805 citations counted in INSPIRE as of 01 Jun 2016 
  
 %\cite{Bojowald:2006da}
\bibitem{Bojowald:2006da}
  M.~Bojowald,
  %``Loop quantum cosmology,''
  Living Rev.\ Rel.\  {\bf 8} (2005) 11
  [gr-qc/0601085].
  %%CITATION = GR-QC/0601085;%%
  %376 citations counted in INSPIRE as of 05 Oct 2016 
  
 %\cite{Ashtekar:2011ni}
\bibitem{Ashtekar:2011ni}
  A.~Ashtekar and P.~Singh,
  %``Loop Quantum Cosmology: A Status Report,''
  Class.\ Quant.\ Grav.\  {\bf 28} (2011) 213001
  %doi:10.1088/0264-9381/28/21/213001
  [arXiv:1108.0893 [gr-qc]].
  %%CITATION = doi:10.1088/0264-9381/28/21/213001;%%
  %366 citations counted in INSPIRE as of 05 Oct 2016 
  
 %\cite{Ashtekar:2006rx}
\bibitem{Ashtekar:2006rx}
  A.~Ashtekar, T.~Pawlowski and P.~Singh,
  %``Quantum nature of the big bang,''
  Phys.\ Rev.\ Lett.\  {\bf 96} (2006) 141301
  %doi:10.1103/PhysRevLett.96.141301
  [gr-qc/0602086].
  %%CITATION = doi:10.1103/PhysRevLett.96.141301;%%
  %385 citations counted in INSPIRE as of 05 Oct 2016 
  
%\cite{Cailleteau:2011kr}
\bibitem{Cailleteau:2011kr}
  T.~Cailleteau, J.~Mielczarek, A.~Barrau and J.~Grain,
  %``Anomaly-free scalar perturbations with holonomy corrections in loop quantum cosmology,''
  Class.\ Quant.\ Grav.\  {\bf 29} (2012) 095010
  %doi:10.1088/0264-9381/29/9/095010
  [arXiv:1111.3535 [gr-qc]].
  %%CITATION = doi:10.1088/0264-9381/29/9/095010;%%
  %112 citations counted in INSPIRE as of 05 Oct 2016  
 
%\cite{Bojowald:2011aa}
\bibitem{Bojowald:2011aa}
  M.~Bojowald and G.~M.~Paily,
  %``Deformed General Relativity and Effective Actions from Loop Quantum Gravity,''
  Phys.\ Rev.\ D {\bf 86} (2012) 104018
  %doi:10.1103/PhysRevD.86.104018
  [arXiv:1112.1899 [gr-qc]].
  %%CITATION = doi:10.1103/PhysRevD.86.104018;%%
  %64 citations counted in INSPIRE as of 05 Oct 2016  
  
%\cite{Bojowald:2015gra}
\bibitem{Bojowald:2015gra}
  M.~Bojowald and J.~Mielczarek,
  %``Some implications of signature-change in cosmological models of loop quantum gravity,''
  JCAP {\bf 1508} (2015) no.08,  052
  %doi:10.1088/1475-7516/2015/08/052
  [arXiv:1503.09154 [gr-qc]].
  %%CITATION = doi:10.1088/1475-7516/2015/08/052;%%
  %18 citations counted in INSPIRE as of 05 Oct 2016  
  
%\cite{Mielczarek:2012tn}
\bibitem{Mielczarek:2012tn}
  J.~Mielczarek,
  %``Asymptotic silence in loop quantum cosmology,''
  AIP Conf.\ Proc.\  {\bf 1514} (2012) 81
  %doi:10.1063/1.4791730
  [arXiv:1212.3527 [gr-qc]].
  %%CITATION = doi:10.1063/1.4791730;%%
  %19 citations counted in INSPIRE as of 05 Oct 2016  
  
 %\cite{Mielczarek:2014kea}
\bibitem{Mielczarek:2014kea}
  J.~Mielczarek, L.~Linsefors and A.~Barrau,
  %``Silent initial conditions for cosmological perturbations with a change of space-time signature,''
  arXiv:1411.0272 [gr-qc].
  %%CITATION = ARXIV:1411.0272;%%
  %5 citations counted in INSPIRE as of 01 Jun 2016 
  
 %\cite{Artymowski:2008sc}
\bibitem{Artymowski:2008sc}
  M.~Artymowski, Z.~Lalak and L.~Szulc,
  %``Loop Quantum Cosmology: holonomy corrections to inflationary models,''
  JCAP {\bf 0901} (2009) 004
  %doi:10.1088/1475-7516/2009/01/004
  [arXiv:0807.0160 [gr-qc]].
  %%CITATION = doi:10.1088/1475-7516/2009/01/004;%%
  %33 citations counted in INSPIRE as of 10 févr. 2016 
  
 %\cite{Singh:2006im}
\bibitem{Singh:2006im}
  P.~Singh, K.~Vandersloot and G.~V.~Vereshchagin,
  %``Non-singular bouncing universes in loop quantum cosmology,''
  Phys.\ Rev.\ D {\bf 74} (2006) 043510
  %doi:10.1103/PhysRevD.74.043510
  [gr-qc/0606032].
  %%CITATION = doi:10.1103/PhysRevD.74.043510;%%
  %117 citations counted in INSPIRE as of 01 Jun 2016 
  
 %\cite{Ashtekar:2009mm}
\bibitem{Ashtekar:2009mm}
  A.~Ashtekar and D.~Sloan,
  %``Loop quantum cosmology and slow roll inflation,''
  Phys.\ Lett.\ B {\bf 694} (2011) 108
  %doi:10.1016/j.physletb.2010.09.058
  [arXiv:0912.4093 [gr-qc]].
  %%CITATION = doi:10.1016/j.physletb.2010.09.058;%%
  %71 citations counted in INSPIRE as of 10 févr. 2016 
  
%\cite{Mielczarek:2010bh}
\bibitem{Mielczarek:2010bh}
  J.~Mielczarek, T.~Cailleteau, J.~Grain and A.~Barrau,
  %``Inflation in loop quantum cosmology: dynamics and spectrum of gravitational waves,''
  Phys.\ Rev.\ D {\bf 81} (2010) 104049
  %doi:10.1103/PhysRevD.81.104049
  [arXiv:1003.4660 [gr-qc]].
  %%CITATION = doi:10.1103/PhysRevD.81.104049;%%
  %58 citations counted in INSPIRE as of 01 Jun 2016  
    
%\cite{Mielczarek:2013xaa}
\bibitem{Mielczarek:2013xaa}
  J.~Mielczarek,
  %``Inflationary power spectra with quantum holonomy corrections,''
  JCAP {\bf 1403} (2014) 048
  %doi:10.1088/1475-7516/2014/03/048
  [arXiv:1311.1344 [gr-qc]].
  %%CITATION = doi:10.1088/1475-7516/2014/03/048;%%
  %7 citations counted in INSPIRE as of 10 Feb 2016 
  
 %\cite{Schwarz:2001vv}
\bibitem{Schwarz:2001vv}
  D.~J.~Schwarz, C.~A.~Terrero-Escalante and A.~A.~Garcia,
  %``Higher order corrections to primordial spectra from cosmological inflation,''
  Phys.\ Lett.\ B {\bf 517} (2001) 243
  %doi:10.1016/S0370-2693(01)01036-X
  [astro-ph/0106020].
  %%CITATION = doi:10.1016/S0370-2693(01)01036-X;%%
  %161 citations counted in INSPIRE as of 20 Aug 2016
  
 %\cite{Schwarz:2004tz}
\bibitem{Schwarz:2004tz}
  D.~J.~Schwarz and C.~A.~Terrero-Escalante,
  %``Primordial fluctuations and cosmological inflation after WMAP 1.0,''
  JCAP {\bf 0408} (2004) 003
  %doi:10.1088/1475-7516/2004/08/003
  [hep-ph/0403129].
  %%CITATION = doi:10.1088/1475-7516/2004/08/003;%%
  %53 citations counted in INSPIRE as of 20 Aug 2016  
  
 %\cite{Zhu:2015xsa}
\bibitem{Zhu:2015xsa}
  T.~Zhu, A.~Wang, G.~Cleaver, K.~Kirsten, Q.~Sheng and Q.~Wu,
  %``Detecting quantum gravitational effects of loop quantum cosmology in the early universe?,''
  Astrophys.\ J.\  {\bf 807} (2015) 1,  L17
 % doi:10.1088/2041-8205/807/1/L17
  [arXiv:1503.06761 [gr-qc]].
  %%CITATION = doi:10.1088/2041-8205/807/1/L17;%%
  %2 citations counted in INSPIRE as of 10 févr. 2016
  
  %\cite{Zhu:2015owa}
\bibitem{Zhu:2015owa}
  T.~Zhu, A.~Wang, G.~Cleaver, K.~Kirsten, Q.~Sheng and Q.~Wu,
  %``Scalar and tensor perturbations in loop quantum cosmology: High-order corrections,''
  JCAP {\bf 1510} (2015) 10,  052
  %doi:10.1088/1475-7516/2015/10/052
  [arXiv:1508.03239 [gr-qc]].
  %%CITATION = doi:10.1088/1475-7516/2015/10/052;%%
  %1 citations counted in INSPIRE as of 10 févr. 2016
  
 %\cite{Liddle:1994dx}
\bibitem{Liddle:1994dx}
  A.~R.~Liddle, P.~Parsons and J.~D.~Barrow,
  %``Formalizing the slow roll approximation in inflation,''
  Phys.\ Rev.\ D {\bf 50} (1994) 7222
  %doi:10.1103/PhysRevD.50.7222
  [astro-ph/9408015].
  %%CITATION = doi:10.1103/PhysRevD.50.7222;%%
  %345 citations counted in INSPIRE as of 01 Jun 2016
  
%\cite{Ashtekar:2006wn}
\bibitem{Ashtekar:2006wn}
  A.~Ashtekar, T.~Pawlowski and P.~Singh,
  %``Quantum Nature of the Big Bang: Improved dynamics,''
  Phys.\ Rev.\ D {\bf 74} (2006) 084003
  %doi:10.1103/PhysRevD.74.084003
  [gr-qc/0607039].
  %%CITATION = doi:10.1103/PhysRevD.74.084003;%%
  %571 citations counted in INSPIRE as of 05 Oct 2016  
 
%\cite{Mielczarek:2010bh}
\bibitem{Mielczarek:2010bh}
  J.~Mielczarek, T.~Cailleteau, J.~Grain and A.~Barrau,
  %``Inflation in loop quantum cosmology: dynamics and spectrum of gravitational waves,''
  Phys.\ Rev.\ D {\bf 81} (2010) 104049
  %doi:10.1103/PhysRevD.81.104049
  [arXiv:1003.4660 [gr-qc]].
  %%CITATION = doi:10.1103/PhysRevD.81.104049;%%
  %59 citations counted in INSPIRE as of 10 Oct 2016  
   			
\end{thebibliography}
\end{document}